\begin{document}
\title{A Rewriting Logic Semantics and Statistical Analysis for
  Probabilistic Event-B}
\author{
  Carlos Olarte\inst{1} \and
  Camilo Rocha\inst{2} \and
  Daniel Osorio\inst{2} 
}
\authorrunning{C. Olarte, C. Rocha and D. Osorio}

\institute{LIPN, Universit\'{e} Sorbonne Paris Nord, Villetaneuse,
  France \and Pontificia Universidad Javeriana, Cali, Colombia.}
\maketitle
\begin{abstract}
  Probabilistic specifications are fast gaining ground as a tool for
  statistical modeling of probabilistic systems. One of the main goals
  of formal methods in this domain is to ensure that specific behavior
  is present or absent in the system, up to a certain confidence
  threshold, regardless of the way it operates amid uncertain
  information. This paper presents a rewriting logic semantics for a
  probabilistic extension of Event-B, a proof-based formal method for
  discrete systems modeling.
  The proposed semantics adequately captures the three sources of
  probabilistic behavior, namely, probabilistic assignments,
  parameters, and concurrency.  Hence, simulation and probabilistic
  temporal verification become automatically available for
  probabilistic Event-B models.
  The approach takes as input a probabilistic Event-B specification,
  and outputs a probabilistic rewrite theory that is fully executable
  in PMaude and can be statistically tested against quantitative
  metrics. The approach is illustrated with examples in the paper.
  \keywords{Probabilistic Event-B \and Statistical model checking \and
    PVeStA}
\end{abstract}

\section{Introduction}\label{sec:intro}
For many systems, there is an obvious need for using specialized
formal methods in the spirit of formalisms, inference systems, and
simulation techniques for selected tasks.
When properly combined, formal methods have a great potential to
become more useful in practice, and scale up because of modularization
and specialization of needs.
In the realm of probabilistic systems, where a vast number of
randomized algorithms and protocols fall, both inference- and
algorithmic-based analysis techniques are needed to answer the key
question of whether such systems are correct. Probabilistic systems
must behave properly, up to a confidence threshold, regardless of the
way they operate amid uncertain information.
A challenging task is to find combinations of formalisms, inference
systems, and simulation techniques for verifying such systems.


This paper focuses on probabilistic simulation and statistical
analysis for a probabilistic extension of Event-B
\cite{DBLP:books/daglib/0024570}, a formal method for system-level
modeling and analysis. Event-B uses set theory as a modeling notation,
refinement to represent systems at different abstraction levels, and
mathematical proofs to verify consistency between refinement levels
(and other proof obligations).
The probabilistic extension of Event-B proposed
by~\cite{aouadhi:hal-01255753} is based on three mechanisms.
First, interaction with the external environment is governed by a
probabilistic choice; that is, external inputs are chosen uniformly
from finite sets representing the potential values for the variables
under the control of the external environment.
Second, the extension includes probabilistic assignment to variables
(representing the state of the system) by uniform choice from finite
sets. This is useful to handle the uncertainty associated to
``external'' data.
Third, all concurrent transitions in a system are probabilistic. That
is, in the presence of the two first mechanisms, concurrency yields a
probabilistic transition system where each possible transition from a
state is weighted by a probability measure.
The authors of~\cite{DBLP:journals/sosym/AouadhiDL19} have proposed an
inference system for reasoning within this probabilistic extension of
Event-B.
The purpose of the presented paper is to complement the work in
\cite{DBLP:journals/sosym/AouadhiDL19} by enabling algorithmic
simulation and statistical model checking.


This paper develops a rewriting logic semantics for the
above-mentioned probabilistic extension of Event-B. It takes as input
a probabilistic Event-B model and outputs a probabilistic rewrite
theory~\cite{agha-pmaude-2006}. The mapping implementing this
translation is explained in detail, including its soundness and
completeness properties for simulation in relation to the given
model. The resulting probabilistic specification is executable in
Maude~\cite{clavel-maudebook-2007} and it is amenable to
simulation-based verification such as, e.g., statistical model
checking with help of the PVeStA model checker
tool~\cite{alturki-pvesta-2011}. The translation has been fully
automated and it supports a wide range of Event-B operators.


The approach presented here can be seen as a complement to the
proof-theoretic techniques developed in
\cite{DBLP:journals/sosym/AouadhiDL19}. It allows system designers to
experiment with the system via simulation and automatically verify
system's properties via (stochastic) model checking, thus gaining more
confidence on the model before embarking on a proving task. It is
known that Event-B offers both proof-theoretic methods, and simulation
and model checking tools (via, e.g., the ProB animator and model
checker~in the Rodin platform
\cite{DBLP:books/daglib/0024570,DBLP:journals/sttt/LeuschelB08}) for
system modeling. These latter features do not exist, to the best of
the authors' knowledge, for probabilistic Event-B models.
Ultimately, the developments presented here open new opportunities for
incarnating the more ambitious long-term ideal of combining
formalisms, inference systems, and simulation techniques for verifying
probabilistic systems.

\noindent{\bf Outline.}
Section~\ref{sec:preventb} presents an overview of the probabilistic
extension of Event-B.
Section~\ref{sec:enc} develops the mapping from Event-B models to
probabilistic rewrite theories and studies its main formal properties.
Section~\ref{sec:cases} showcases an example of the transformation and
its statistical analysis. The web page of the companion tool
\cite{tool.website} of this paper presents other case studies 
and provides some  experimental results. 
Finally, Section~\ref{sec:conc}  concludes the paper.

\section{Probabilistic Event-B in a Nutshell}\label{sec:preventb}

This section presents an overview of the probabilistic extension of
Event-B proposed by~\cite{DBLP:journals/sosym/AouadhiDL19}. As it is
the case for the non-probabilistic case, a probabilistic Event-B model
consists of a \emph{context} (\S \ref{sec:context}) and a
\emph{machine} (\S \ref{sec:machine}).

\vspace{-0.3cm}
\subsection{Contexts}\label{sec:context}
A \textit{context} in Event-B describes the constant part of the
system, including the definition of deferred sets and constants. More
precisely, a context is a triple $\mathcal{C} = \langle SET_l \cup
SET_n, CTE\rangle$ where $SET_l$ is a set of set declarations of the
form $S : \{id_1,...,id_m\}$, $SET_n$ is a collection of set
declarations of the form $S : n$ (where $n$ is a natural number), and
{$CTE$} is a set of constant definitions of the form $c ~:~ type ~:=~
v$.

\noindent\textbf{Deferred sets}.  A typical context can define any
number of deferred sets, and axioms indicating either that any such a
set is finite (but its cardinality is unknown) or has a fixed
cardinality $n$.  The former construction is not considered here,
while the latter is supported via the $SET_n$ component. For instance,
the declaration \code{ENUM:3} generates three constant symbols (e.g.,
\code{ENUM$i$}, for $1 \leq i \leq 3$) that inhabit the set
\code{ENUM}.  Moreover, the deferred set declaration
\code{STATE:\{open,close\}} defines the set \code{STATE} inhabited
exactly by the two (distinct) constant symbols \code{open} and
\code{close}.

\noindent\textbf{Types, values, and constants.} The syntax
\code{DEFAULT:STATE :=open} defines the constant \code{DEFAULT} of
type \code{STATE} with value \code{open}. Basic types include deferred
sets, Boolean values, and integer numbers. Types can be also built via
Cartesian products (\code{A * B}) and powersets (\code{POW(A)}). From
those constructions, relations and (partial, total, injective, etc.)
functions can be defined as expected. For instance, the set-theoretic
language of Event-B reduces $r \in A \rel B$ ($r$ is a relation from
$A$ to $B$) into $r \subseteq A \code{*} B$ and finally into $r \in
\code{POW}(A \code{*} B)$.  For succinctness, $A$ and $B$ in $A
\code{*} B$ are restricted here to be basic types.

According to the types above, the values for constants can be
elements of a deferred set, numbers, Booleans, pairs ($a \mapsto b$), set
of values as in $\{a,b,c\}$, or integer intervals as in $1 .. 42$. The complete
grammar can be found in \cite{tool.website}. 

\noindent\begin{wrapfigure}{l}{0.33\textwidth}
\begin{minipage}{.31\textwidth}
\vspace{-0.8cm}
\begin{EventB}
CONTEXT GEAR_CTX   
 SETS  
   SUD:{ up, down }   
   SER:{ extended, retracted }   
   SOC:{ open, close }
 CONSTANTS  
   FCMD : Nat := 9    
END 
\end{EventB} 
\vspace{-1cm} 
\end{minipage}
\end{wrapfigure} 

As a running example, consider the controller for a landing gear
system modeled in Event-B in~\cite{DBLP:journals/sttt/BoniolWAS17} and
probabilized in~\cite{aouadhi:hal-01255753}.  When landing, the
following sequence of actions occur: the doors of the system are
opened, the landing gears are extended, and then the doors are
closed. Similarly, after taking off, the doors are opened, the gears
retracted, and the doors are closed. The pilot may initiate and
interrupt these sequences with a handle that can be in two positions:
\emph{up} (executing the retracting sequence) and \emph{down}
(extending sequence). The context \code{GEAR_CTX} defines the three
needed deferred sets and a constant of type $\textit{Nat}$ that will
be used in the next section.

\vspace{-0.2cm}

\subsection{Machines}\label{sec:machine}
A \textit{machine} in Event-B specifies a set of variables, defining
the state of the system, and the system's actions, called
\emph{events}.  More precisely, a \textit{machine} is a structure of
the form $\mathcal{M} = \langle\mathcal{C}, \vec{x}, I, \mathcal{E},
init \rangle$ where $\mathcal{C}$ is a context, $\vec{x}$ is a set of
variables typed by the invariant $I$, $\mathcal{E}$ is a set of
probabilistic events, and $\textit{init}$ is the initialization event.

\noindent\textbf{State of the machine. } In an Event-B specification, 
the section \code{SEES} of a machine
determines the context $\mathcal{C}$ accessible by the model.  For
simplicity, it is assumed here that each machine \emph{sees} exactly
one context.  The section \code{VARIABLES} contains a list of
identifiers.  The types (or domains) for each variable (e.g.,
\code{currentState:STATE}) are defined in the section
\code{INVARIANTS}. Besides typing information, Event-B models include
also invariant properties that must be preserved along the system's
transitions. The translation in Section \ref{sec:enc} requires only
the types for the variables and hence, for the moment, only invariants
of the form $x : A$ (the type of $x$ is $A$) are considered.  Section
\ref{sec:cases} shows how the infrastructure presented here can be
used to verify some other properties.

\noindent\textbf{Events.} The machine \emph{initialization} is
deterministic and the 
section \code{INITIALISATION} assigns values (of the appropriate type)
to all the variables (e.g., \code{currentState := open}).  Such values
are built from elements of deferred sets, constants in the context,
standard arithmetic and Boolean operations, and expressions on sets
and relations (see \cite[Chapter 6]{DBLP:books/daglib/0024570} for the complete 
set-theoretic language).

\noindent\begin{wrapfigure}{l}{0.13\textwidth}
~\begin{minipage}{.13\textwidth}
\vspace{-1.0cm}
\begin{EventB}
EVENT ID
  WEIGHT W
  ANY P
  WHERE G
  THEN A
END
\end{EventB}
\end{minipage}
\vspace{-1.0cm}
\end{wrapfigure}
The system's actions are defined by \emph{events}, which are composed
of \emph{guards} and \emph{actions}.  In the case of probabilistic
models, each event is also assigned a \emph{weight}.  At a given
state, the numerical expression \code{W} determines the weight of the
event with identifier \code{ID}. Such an expression is built from
constants, the machine's variables, and arithmetic operations. As
explained later in the semantics, events with higher weights are more
likely to be chosen for execution.

The variables declared in the optional section \code{ANY} are called
\emph{parameters}; they represent an interaction with an external
environment, which is not under the control of the system. Hence, the
behavior of the event is not fixed but it may have different outcomes
depending on the values chosen for those parameters. The expressions
$y :\in S$ states that the parameter $y$ may take values from the
non-empty and finite set $S$.  The set $S$ can be a an arbitrary
expression returning a set, including for instance: set comprehension
($\{x. S \mid P(x)\}$, elements of $S$ that satisfy the predicate $P$
and $\{x.S \mid F(x)\}$ defining the set $\{F(x) \mid x \in S\}$); the
Cartesian product; domain of a relation; range restriction ($r \ranres
S = \{x\mapsto y \in r \mid y\in S\}$); domain subtraction ($S\domsub
r = \{x\mapsto y \in r \mid x \notin S\}$); overriding ($ r \ovl s = s
\cup (\dom{(s)} \domsub r)$; etc.  The complete list of expressions
involving set/relations (union, membership, range restriction,
cardinality, etc.) currently supported by our tool can be found
at \cite{tool.website}.
Semantically, in $y
:\in S$, a value from $S$ is chosen with a uniform probability
distribution and assigned to the parameter $y$.

The Boolean expression \code{G} in the \code{WHERE} section determines
whether the event can be fired at a given state or not. The
\emph{action} of the event determines the new state and it is
specified by a list of simultaneous assignments in the \texttt{THEN}
section.  It is assumed that the right hand side (RHS) of an
assignment is an expression of the appropriate type including
constants and variables (considering the values before the
assignment), as well as the event's parameters. When the RHS is a list
of expressions, a uniform probabilistic distribution is assumed for
them. Moreover, it is also possible to use an enumerated probabilistic
assignment as in $x := \{\texttt{open}@0.7, \texttt{close}@0.3\}$,
specifying that $x$ may take the value \texttt{open} (resp.,
\texttt{close}) with probability $0.7$ (resp., $0.3$).

\noindent\begin{wrapfigure}{l}{0.34\textwidth}
\vspace{-1cm}
\begin{minipage}{.33\textwidth}
\begin{EventB}
MACHINE GEAR SEES GEAR_CTX
 VARIABLES handle gear door cmd
 INVARIANTS
   handle : SUD
   gear : SER
   door   : SOC
   cmd  : Nat 
 INITIALISATION
   handle := up
   cmd    := 0
   gear   := retracted
   door   := closed
\end{EventB}
\end{minipage}
\vspace{-1cm}
\end{wrapfigure}
In the running example, the pilot may use the handle to initiate and
interrupt the extending and retracting sequences. The machine
\code{GEAR} uses three variables to observe the current state of the
handle, the gear, and the doors.  The variable $\textit{cmd}$ controls
the number of times the pilot has initiated the sequence. \\

The event \code{extend} in Figure \ref{fig:events} models the
extension of the gear when the handle is down and the doors are
opened. As a model of failures, there is a $10\%$ of risk that the
gears do not react correctly to their command.  The event
\code{retract} can be explained similarly.  The state of the doors are
controlled by the events \code{open} and \code{close}. For instance,
if the doors are closed, the gear retracted, and the current command
is \emph{extend} (\code{handle=down}), the doors are opened with
probability 0.9.

The interface of the system with the pilot is modeled with the event
$pcmd$ below.  The requirements of the system impose that: the pilot
cannot command the handle

\begin{wrapfigure}{l}{0.22\textwidth}
\vspace{-1cm}
\begin{minipage}{.22\textwidth}
\begin{EventB}
EVENT pcmd 
  WEIGHT 
    FCMD - cmd
  ANY    
    cc :$\in$ {up, down}
  WHERE  
    cmd <= FCMD
THEN   
    handle := cc
    cmd    := cmd + 1
END
\end{EventB}
\end{minipage}
\vspace{-1cm}
\end{wrapfigure}
\noindent more than a fixed number of times before one of the sequences begins;
and consecutive uses of the handle must decrease the priority of using it again.
Since the pilot may interrupt an already started sequence, this event modifies
the state of the handle with equal probability to \code{up} and \code{down}.
Due to the weight of the
event, such changes are allowed only up to \code{FCMD} times (the constant
defined in the context \code{GEAR_CTX}).

\begin{figure}[t]
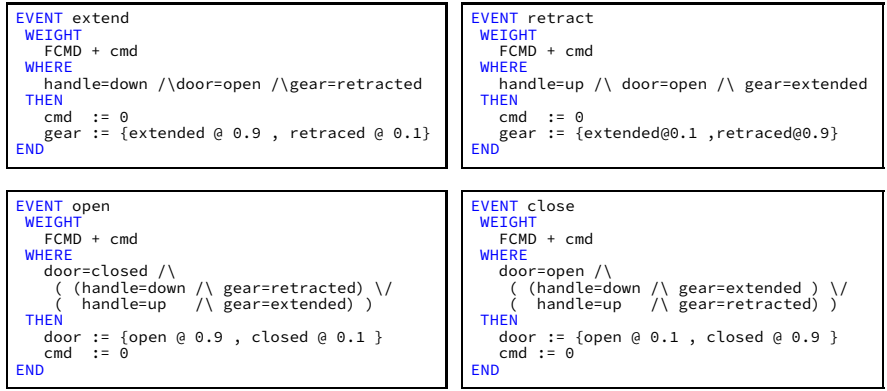

\noindent
\begin{minipage}{.46\textwidth}
\begin{EventB}
EVENT extend 
 WEIGHT 
   FCMD + cmd
 WHERE 
   handle=down /\door=open /\gear=retracted
 THEN 
   cmd  := 0
   gear := {extended @ 0.9 , retraced @ 0.1}
END
\end{EventB}
\end{minipage}
~~~
\begin{minipage}{.44\textwidth}
\begin{EventB}
EVENT retract 
 WEIGHT 
   FCMD + cmd
 WHERE 
   handle=up /\ door=open /\ gear=extended
 THEN 
   cmd  := 0
   gear := {extended@0.1 ,retraced@0.9}
END
\end{EventB}
\end{minipage}

\vspace{-0.2cm}
\noindent
\begin{minipage}{.46\textwidth}
\begin{EventB}
EVENT open 
 WEIGHT 
   FCMD + cmd
 WHERE 
   door=closed /\
    ( (handle=down /\ gear=retracted) \/ 
    (  handle=up   /\ gear=extended) ) 
 THEN 
   door := {open @ 0.9 , closed @ 0.1 }
   cmd  := 0
END
\end{EventB}
\end{minipage}
~~~
\begin{minipage}{.44\textwidth}
\begin{EventB}
EVENT close
 WEIGHT 
   FCMD + cmd
 WHERE 
   door=open /\
    ( (handle=down /\ gear=extended ) \/
    (  handle=up   /\ gear=retracted) ) 
 THEN 
   door := {open @ 0.1 , closed @ 0.9 }
   cmd := 0   
END
\end{EventB}
\end{minipage}
\caption{Events controlling the gear and the doors\label{fig:events}}
\vspace{-0.6cm}
\end{figure}

\vspace{-0.2cm}
\subsection{Probabilistic Semantics}\label{sec:prob-sem}
\vspace{-0.2cm}

The state of a machine is a valuation $\stateSym$ mapping variables
and constants to values of the appropriate type.  For a variable or
constant $x$, $\stateSym(x)$ denotes the value of $x$ in $\stateSym$.
Given an expression $E(\vec{x})$ that may depend on the (list of)
variables $\vec{x}$, the evaluation of $E$ in the context $\stateSym$
is denoted as $E(\vec{x})[\stateSym]$.  Moreover, for an event $e \in
\mathcal{E}$, $\evar(e)$ is the set of variables that appear on the
left hand side (LHS) of the assignments in the action of the event.
In each event, a variable can appear only once as LHS in the list of
assignments.

An event $e = \langle w(\vec{x}),\vec{y},G(\vec{x},\vec{y}), act
\rangle \in \mathcal{E}$ is \emph{enabled} at state $\stateSym$ if
$w(\vec{x})[\stateSym] > 0$ and there exists a valuation $\sigma$ for
the parameters $\vec{y}$ making the guard true, i.e.,
$G(\vec{x},\vec{y})[\stateSym]\sigma = {true}$. Then, the action $act$
can be performed.  Given a state $\stateSym$, $\eevt(\stateSym)$
denotes the set of enabled events at $\stateSym$.  The notation
$w_{e}$ is used to denote the weight expression of the event $e$;
similarly for the other components of the event.

The semantics of a machine $\mathcal{M}$ is a probabilistic labeled transition
system (PLTS) $\langle S, s_0, Acts, T\rangle $ where $S$ is the set of states,
$s_0$ is the valuation obtained after executing the initialization, $Acts$ is
the set of labels of the events, and $T : S \times Acts \times S \to [0,1]$ is
the transition probability function defined as follows: 

\noindent\resizebox{.85\textwidth}{!}{
$
(\stateSym, e, \stateSym') = 
\left\{
\begin{array}{ll}
0 & \mbox{ if } \stateSym \not\to \stateSym'
\\
\underbrace{
\frac{w_{e}[\stateSym]}{\sum\limits_{e\in \mathcal{E}} w_e[\stateSym]}
}_{weight}
\times
\sum\limits_{\sigma\in \mathcal{T}(\stateSym,e)}
\left(
\underbrace{
\frac{1}{|\mathcal{T}(\stateSym,e)|}}
_{parameters}
\times
\underbrace{
\prod\limits_{x\in \evar(e)}
~
\sum\limits_{E\in Val(x,\stateSym,\stateSym',\sigma,e)}P(E)
}_{assignments}
\right)
& \mbox{ otherwise}  
\end{array}
\right.
$
}

\noindent where:
 $\stateSym \not\to \stateSym'$ means that either
   $\eevt(\stateSym)=\emptyset$ or $x[\stateSym] \neq x[\stateSym']$
   for a variable $x$ not in $\evar(e)$ (i.e., $\stateSym$ and
   $\stateSym'$ differ on a variable not modifiable by the event);
   $\mathcal{T}(\stateSym, e)$ is the set of valuations for the
   parameters of the event $e$ that make its guard true, i.e.,
   $\sigma\in \mathcal{T}(\stateSym, e)$ iff
   $G_{e}(\vec{x},\vec{y})[\stateSym]\sigma = \true$; 
   $Val(x,\stateSym,\stateSym',\sigma,e)$ is the set of expressions
   on the RHS of the assignment that assign to $x$ its value in
   $\stateSym'$, i.e.,  $E \in
   Val(x,\stateSym,\stateSym',\sigma,e)$ iff $E[\stateSym]\sigma =
   x[\stateSym']$;  and $P(E)$ is the probability of choosing
   the expression $E$ among all the other expressions.

The aim of the first term (\emph{weight}) is to normalize the weights
of the enabled events.  Hence, the larger $w_{e}[\stateSym]$ is, the
higher the probability of choosing $e$ for execution when
enabled. Since the expression $w_e$  depends on the state of the
variables, such probability may change during the execution of the
system. The second term (\emph{parameters}) counts the
number of possible assignments for the parameters that satisfy the
guard of the event at state $\stateSym$.  Recall that the domain for
the parameters must be a non-empty finite set.  The third term
(\emph{assignments}) considers all possible expressions that can be
generated due to assignments with a list of expressions (uniform
distribution) or enumerated probabilistic assignments.  
In the following, $\stateSym\rede{(p,e)} \stateSym'$ denotes 
that $T(\stateSym,e,\stateSym')=p$ and $p>0$.

The work in \cite{DBLP:journals/sosym/AouadhiDL19} establishes
conditions for the PLTS generated by an Event-B specification to be a
Discrete Time Markov Chain (DTMC). For that, some proof obligations
must be discarded: (1) the weight of the events are natural numbers;
and (2) in a probabilistic assignment $x:=\{E_1@p_1,...,E_n@p_n\}$,
$0<p_i\leq 1$ and $\sum p_i = 1$.  In that case, it is possible to
show that for all states $\stateSym$, $\sum \{p \mid \stateSym
\rede{(p,e)}\stateSym'\} = 1$. These conditions are assumed to be true
in the rest of the paper for any machine $\mathcal{M}$.

\vspace{-0.15cm}
\section{Translating Machines to Probabilistic Rewrite Theories}\label{sec:enc}

Rewriting logic (RL) \cite{meseguer-rltcs-1992} (see a survey in
\cite{meseguer-twenty-2012}) is a general model for concurrency where systems
are declaratively specified using algebraic data types and
(conditional) rewrite rules. This section presents a map $\os \cdot \cs : \mcal \to
\rcal_\mcal$ from a probabilistic Event-B machine to a probabilistic rewrite
theory \cite{agha-pmaude-2006}. The rewrite theory $\rcal_\mcal$ is obtained in
two steps. First, it contains (as a subtheory) a rewrite theory $\rcal$
defining sorts and operations to represent declared types/sets and constants of
any context, as well as the infrastructure needed to encode the variables and
the events of any probabilistic machine. Second,  $\rcal$ is extended with rules
specific for the machine $\mcal$. Computation with $\rcal_\mcal$ is shown
to be free of un-quantified non-determinism (a condition needed for
statistical analysis \cite{agha-pmaude-2006}), as well as to be sound and complete
w.r.t. the probabilistic semantics of $\mcal$.

In the following sections, the background on RL needed to understand the translation is
introduced gradually. In most of the cases, the notation of Maude
\cite{clavel-maudebook-2007}, a high-level language   that supports membership
equational logic and rewriting logic specifications, will be adopted. This has
the immediate effect of producing an executable specification. For the sake of
readability, some details of the specification are omitted. The complete
specification of the theory $\rcal_\mcal$ is available at \cite{tool.website}
as well as a parser automatizing the translation $\os \cdot \cs$.

\subsection{Translating Contexts} \label{sec:trans-ctx}

A \emph{rewrite theory} is a tuple $\rcal = (\Sigma, E \uplus B, R)$. The
static behavior  of the system  is modeled by the order-sorted equational
theory $(\Sigma, E \uplus B)$ and the dynamic behavior by the set of rewrite
rules $R$ (more on this in \S \ref{sec:spec-machine}). The signature $\Sigma$
defines a set of typed operators  used to build the terms of the language
(\ie{} the syntax of the modeled system).   $E$ is a set of (conditional)
equations over $T_\Sigma$ (the set of terms built from $\Sigma$) of the form $t
= t' ~ \mathbf{if} \phi$. The equations specify the algebraic identities that
terms of the language must satisfy (e.g., $x+0 = x$). Moreover, $B$ is a set of
structural axioms (associativity, commutativity, and identity, or combinations of them)
over $T_\Sigma$ for which there is a finitary matching
algorithm.
The equational theory thus defines algebraic data types and
deterministic and finite computations as in a functional programming language.

Let's start defining appropriate sorts (types) and operators to specify an
Event-B contexts $\mathcal{C} = \langle SET_l \cup SET_n, CTE\rangle$.
Natural and integer numbers, as well as Booleans, are mapped
directly to the corresponding sorts in Maude.  Any other constant symbol
inhabiting the deferred sets $SET_l$ and $SET_n$ (e.g., \code{open},
\code{close}) is represented as a string.  The sort \code{EBElt} below defines
the values for basic types:

\begin{maude}
fmod EBELT is                        --- Functional module EBELT (equational theory)
  sort EBElt .                       --- Basic values
  op elt : Int -> EBElt        .     --- Nat and Int
  op elt : Bool -> EBElt       .     --- Boolean 
  op elt : String -> EBElt     .     --- Elements of deferred sets 
  op _<_ : EBElt EBElt -> Bool .     --- Order on EBElt
[...]
--- Total order on EBElt
view EBElt< from STRICT-TOTAL-ORDER to EBELT is sort Elt to EBElt . endv
\end{maude}

\noindent Terms \code{elt("open")} and \code{elt(9)} of sort \code{EBElt}
represent, respectively, the constant symbol \code{open} in the deferred set
\code{SOC} and the number $9$. The equations defining the strict total order
\code{\_<\_} (underscores in Maude denote the position of the parameters in an
operator) are defined as expected for elements of the same type (e.g., 
for two integers $n,m$, 
\code{elt(n) < elt(m) = n < m}).  Although
elements of different types are not supposed to be compared in (well-typed)
machines, the definition of $<$ decrees that \code{elt(s) <
elt(b)} for any string $s$ and Boolean $b$ and \code{elt(b) < elt(n)} for any
number $n$.

\begin{maude}
fmod EBSET is                           --- Sets of EBElt 
  pr SORTABLE-LIST-AND-SET{EBElt<} * (sort Set{EBElt<} to EBSet) .  
  op _.._ : Int Int -> EBSet .          --- Building finite subsets of Int 
  op gen-set : String Nat -> EBSet .    --- Building indexed (string) constants 
  op gen-set : List{String} -> EBSet .  --- Building a EBSet from a list of strings 
[...]
\end{maude}

\noindent Sortable lists allow to uniquely
represent sets as lists, which is key for having quantified non-determinism.
The Maude's theory \code{SORTABLE-LIST-AND-SET} is instantiated 
with the total order \code{EBElt<} defined above and the sort
\code{Set\{EBElt<\}} is renamed to \code{EBSet}. 
The defined operators, and their  equations omitted here, are used to
encode deferred sets by enumeration, cardinality or as integer
intervals: \code{gen-set("open" "close")} reduces to
\code{\{elt("open"), elt("close")\}}; \code{gen-set(s,n)} reduces to
the set 

\noindent\code{\{elt(s1),...,elt(sn)\}} and the term \code{n .. m} reduces
to \code{\{elt(n),...,elt(m)\}}.

Pairs of \code{EBElt}s as well as (sortable) lists and sets of pairs
 are built as follows: 

 \begin{maude} 
fmod EBPAIR is  
  pr EBSET .  
  sort EBPair .   --- Pairs of EBElt 
  op _|->_ : EBElt EBElt -> EBPair . 
  [...]

view EBPair from STRICT-TOTAL-ORDER  to EBPAIR is sort Elt to EBPair  . endv

fmod EBRELATION is --- Relations / sets of pairs
  pr SORTABLE-LIST-AND-SET{EBPair} *  (sort Set{EBPair} to EBRel) .
 [...]
\end{maude}

The module \code{EBRELATION} implements most of the operations on sets
and relations defined in the syntax of Event-B. However, as explained in
Section \ref{sec:spec-machine}, the operations and equations needed to
specify expressions including set comprehensions ($\{x.S \mid P(x)\}$
and $\{x.S \mid F(x)\}$) are generated according to the
Event-B model at hand.

Values for constants and variables  are terms
of sort \code{EBType} built from (possibly singleton) sets
of \code{EBElt}s or \code{EBPair}s:

\begin{maude}
fmod EB-TYPE is              
  pr EBRELATION . 
  sort EBType .
  op val : EBSet -> EBType . --- Sets of basic types
  op val : EBRel -> EBType . --- Sets of pairs 
[...]
\end{maude}

\noindent
Hence, the value of the Event-B constant
\code{FCMD}
is encoded as the term 
\code{val(elt(9))}
while
a function $f$ of type 
\code{1..2 * SOC}
might be assigned the value

\noindent\code{val(  (elt(1) |-> elt("open"), elt(2)|-> elt("close"))  )}.

All Boolean and arithmetic operations, and operations on sets and
relations are lifted to operators of sort
\code{EBType} ad-hoc. 
For instance, \code{val(elt(2))+val(elt(3))} reduces
to \code{val(elt(3)+elt(2))} that, in turn, reduces to
\code{val(elt(3+2))} 
\footnote{Operators of sort \code{EBType} are certainly partial and a term  such as \\
\code{val(elt(2))+val(elt("open"))}
does not have any reduction. It is worth noticing that such terms
cannot appear in the encoding of a well-typed Event-B model.}.  The
theory \code{EB-TYPE} thus defines an encoding $\encexp{\cdot}$ from
Event-B expressions into \code{EBType} expressions where Event-B
(arithmetic, Boolean, relational, and set) operators and values are
mapped into the corresponding terms defined in \code{EB-TYPE}. Most of
the cases in the definition of $\encexp{\cdot}$ are immediate. The
interesting cases will be introduced gradually next.

An Event-B context is specified as a mapping from the identifiers of
the deferred sets to \code{EBSet}s and context's constants to their
values in the sort \code{EBType}.  This is the purpose of the
following theory:

\begin{maude}
mod EBCONTEXT is 
  pr CONFIGURATION . pr EB-TYPE  . pr MAP{Qid, EBSet} . pr MAP{Qid, EBType} .
  subsort Qid < Oid .    --- Names for contexts
  op Context  : -> Cid . --- Class for contexts
  op sets :_      : Map{Qid , EBSet}  -> Attribute .        --- Context's user-defined sets
  op constants :_ : Map{Qid , EBType} -> Attribute .        --- Context's constants
  op init-context : Qid -> Object .                         --- Building a context
  var Q : Qid .
  eq init-context(Q) = < Q : Context | sets : init-sets, constants : init-constants  > .
  --- Operators to be instantiated by the implementation of the context
  op init-sets : ->  Map{Qid , EBSet} .      --- Building deferred sets
  op init-constants : ->  Map{Qid , EBSet} . --- Initializing constants 
endm
\end{maude}

It is customary in Maude to represent complex systems by using an object
oriented notation (theory \code{CONFIGURATION}). An object $O$ of a class $C$
is represented by a record-like structure of the form $\langle
O:C~|~a_1:v_1,\cdots,a_n:v_n\rangle$, where $a_i$ are attribute identifiers and
$v_i$ are terms that represent the current values of the attributes. The state
of the system is then a multiset of objects called \emph{configuration}. 
In the above theory, quoted 
identifiers (e.g., \code{'GEAR\_CTX}) are used as object identifiers
(\code{subsort Qid < Oid}). Moreover,  a new class identifier (\code{Cid}) is
defined along with the needed attributes. The term 
\code{init-context(Q)} encodes a context with identifier \code{Q}
and attributes \code{sets} and \code{constants}. 
Logical variables as \code{Q} are implicitly  universally
quantified in equations and rules. 

The theory \code{EBCONTEXT} is common to any Event-B model and each particular
context needs to extend it  with equations populating the definition of the
deferred sets and also initializing the constants. For instance,  in the running example, the
following equations need to be added: 

\begin{maude}
  eq init-sets =       ( 'SUD |-> gen-set("up" "down"),     
                         'SOC |-> gen-set("open" "close")) .
                         'SER |-> gen-set("extended" "retracted"), 
  eq init-constants =  ( 'FCMD |-> val(elt(9)) ) .
\end{maude}

\begin{definition}[Encoding of Contexts]\label{def:enc-ctx}
Let $\mathcal{C}=\langle SET_l, SET_n, CTE \rangle$ be an Event-B
context.
The theory $\ctxTheory$ results from extending
\code{EBCONTEXT} with the following equations: 

\noindent\resizebox{.9\textwidth}{!}{
$
\begin{array}{lll}
    \mbox{\it init-sets} &=& \{ id_S \mapsto \mbox{{\it gen-set}}( (id_1,...,id_m) ) \mid id_S:\{id_1,...,id_m\}\in SET_l  \}
    \cup \{id_S \mapsto \mbox{\it gen-set}(id_S,k) \mid id_S:k \in SET_n\}
 \\
\mbox{\it init-constants} &=& \{ c \mapsto \encexp{e} \mid (c~:~type~:=~e) \in CTE\} 
\end{array}
$
}
\end{definition}

\subsection{Encoding Machines} \label{sec:spec-machine} 

The specification of a machine $\mcal$ in the theory
$\rcal_{\mcal}$ requires some extra sorts and equations, but also, 
(probabilistic)  
rewrite rules. Before continuing with the encoding, the meaning of 
the set of rewrite rules $R$ in  $(\Sigma, E \uplus B, R)$ is explained.

\noindent\textbf{Probabilistic rewrite rules.} A rewrite rule
\(\crl{l(\olist{x})}{r(\olist{x})}{\phi(\olist{x})}\) specifies a pattern
$l(\olist{x})$ that can match some fragment of the system's state $t$ if there
is a substitution $\theta$ for the variables $\olist{x}$ that makes
$\theta(l(\olist{x}))$ equal (modulo the set of structural axioms $B$) to that
state fragment, changing it to the term $\theta(r(\olist{x}))$ in a local
transition if the condition $\theta(\phi(\olist{x}))$ is true.  In a
probabilistic rewrite theory~\cite{agha-pmaude-2006}, rewrite rules can have
the more general form \(
\pcrl{l(\olist{x})}{r(\olist{x},\olist{y})}{\phi(\olist{x})}{\olist{y} {\ :=\ }
\pi(\olist{x})}, \) where some new variables $\olist{y}$ are present in the
pattern $r$.  Due to the new variables $\olist{y}$, the next state specified by
such a rule is not uniquely determined: it depends on the choice of an
additional substitution $\rho$ for the variables $\olist{y}$.  In this case,
the choice of $\rho$ is made according to the family of probability functions
$\pi_\theta$: one for each matching substitution $\theta$ of the variables
$\olist{x}$. 

Probabilistic rewrite theories can be simulated directly in Maude by sampling,
from the corresponding probabilistic functions,  the values for the variables
$\olist{y}$ appearing on the RHS. Moreover, using a Monte-Carlo simulation and
the query language QuaTEx (Quantitative Temporal Expressions), it is possible
to analyze quantitative properties of the system by statistical model checking
(more details on \S \ref{sec:cases}).

%
%

\noindent\textbf{Machine's variables and state.} 
The specification of a machine starts with a theory that
extends \code{EBCONTEXT} with a mapping from the identifiers of the
variables to their values:

\begin{maude}
mod EBMACHINE is
  inc EBCONTEXT  .                                    --- Context specification. 
  op Machine  : -> Cid .                              --- Class for machines
  op variables :_ : Map{Qid , EBType} -> Attribute .  --- Machine's variables 
  op init-variables : ->  Map{Qid , EBType} .         --- Instantiated by the machine at hand
  op init-machine : Qid Qid -> Object .               --- Initial state of the machine
  vars QM QC : Qid .
  eq init-machine(QC, QM)  =     --- Configuration defining the machine QM and its context QC
                  init-context(QC)                              --- Context 
                  < QM : Machine | variables : init-variables > --- Variables
                  < events : Events | state: init-events >.     --- Events 
[...] --- event specification explained below
\end{maude}

The specification of a machine $\mathcal{M}$ as the rewrite theory
$\mathcal{R}_\mathcal{M}$ is obtained by extending the previous theory
in two ways:
(1) the initialization event of the machine gives rise to
an equation that populates the mapping \code{init-variables}.  And
(2), building from the infrastructure in  \code{EB-TYPE}, different equations and rewrite rules
are added to \code{EBMACHINE} in order to encode the semantics of the
machine's events.

For the running example, (1) amounts to define the equation:

\begin{maude}
    eq init-variables = ('handle |-> val(elt("up")))        , ('door   |-> val(elt("close"))),
                        ('gear   |-> val(elt("retracted"))) , ('cmd    |-> val(elt(0))) .
\end{maude}

More generally, for any machine with initialization
$x_1:=E_1,...,x_n:=E_n$ the following equation needs to be added to
\code{EBMACHINE}: 

\begin{equation}\label{eq-vars}
\mbox{\it init-variables} = x_1 \mapsto \encexp{E_1},...,
x_n \mapsto \encexp{E_n}
\end{equation}

For (2), the theory must generate a purely probabilistic transition system
without un-quantified non-determinism, thus
guaranteeing that execution paths in the system form a measurable set \cite{agha-pmaude-2006}.
Otherwise, it is not possible to perform (sound) statistical analyses (\S
\ref{sec:cases}). Hence, the theory \code{EBMACHINE} is extended with:

\begin{enumerate}[label=(\roman*)]
 \item A deterministic mechanism that, given the current state of the
   machine, determines whether an event is enabled or not.
 \item A probabilistic rule that chooses, according to the weights of
   the enabled events, the next event to be executed. This rule is
   common to any Event-B model and thus defined directly in the theory 
   \code{EBMACHINE}.
 \item For each event, a rule that chooses probabilistically the
   parameters of the event (if any) as well as the values for
   probabilistic assignments.  Then, the state of the machine is
   updated accordingly.  The application of this rule will correspond
   to an observable state transition of the machine $\mathcal{M}$.
\end{enumerate}

\noindent
\textbf{Events' state (i). }Starting with (i), the sort \code{EvState}
(declared in \code{EBMACHINE}) defines four possible states for an
event:
\begin{maude}
 sort EvState .                             --- States of events
 ops blocked unknown execute : -> EvState . 
 op enable : NzNat -> EvState .             --- Enable with a given weight w > 0
\end{maude}

\noindent
and the object \code{events} (see \code{init-machine} in the module
\code{EBMACHINE}) stores a list with the state of each event:

\begin{maude}
   sort Event LEvent .                            --- Events and list of events
   op ev : Qid EvState -> Event         .         --- ID of the event and its state
   op state:_ : LEvent -> Attribute     .         --- Attribute for objects of class Events   
   op init-events : -> LEvent .                   --- To be instantiated by the machine at hand
\end{maude}

Initially, all the events are in state \emph{unknown}. Hence, for any
machine with events $e_1,...,e_n$, \code{EBMACHINE} must be extended
with the following equation

\begin{equation}\label{eq:events}
\mbox{\it init-events} =  ev(e_1, unknown) ~\cdots~ ev(e_k, unknown) 
\end{equation}

For each constant and variable in the model, a Maude variable of
sort \code{EBType} is added. This facilitates the definition of the
forthcoming equations and rules. In the running example:
\lstinline{vars $FCMD $handle $gear $door $cmd  : EBType .}
Such variables will appear in the objects \code{Context} and
\code{Machine}. This allows for defining $\encexp{x} =\$x$ when the
Event-B variable or constant $x$ appears in the context of an
expression. The prefix ``\$'' is added to avoid clash of names.
Consistently, the following shorthands will be used (\code{CC},
\code{MM}, and \code{EE} in Maude's snippets):

\noindent$
\begin{array}{l}
\termC = \langle \mbox{\it C} : \mbox{\it Context} \mid  sets : \$sets, constants: \mbox{\it init-constants}\rangle
\\
\termM = \langle \mbox{\it M} : \mbox{\it Machine} \mid variables:  \{ x \mapsto \$x\mid  x \in \vec{x}\} \rangle
 \quad  \termE = \langle \mbox{\it E} : \mbox{\it Events} \mid state: \mbox{\it init-events}  \rangle
\end{array}
$
 
\noindent For each event $e$, \code{EBMACHINE} is
extended with an equation that (deterministically) updates the state of  $e$ from
\emph{unknown} to \emph{blocked} or to \emph{enabled}.  More precisely,
consider an event $e \in \mathcal{E}$ with guard $e_G$, weight $e_W$, and
set of parameter declarations $e_{any}$.  Let $W=
\mbox{\it ebtype2nat}(\encexp{e_w})$ and $B=
\mbox{\it ebtype2bool}(\encexp{e_G})$ where 
\code{ebtype2nat(val(elt(n))) = n} (and similarly for \code{ebtype2bool}). The needed equation is the following: 

\begin{equation}\label{eq:act-evt}
\small
\begin{array}{lll}
\termC~\termM~
 \langle E:\mbox{\it Events} \mid  \mbox{\it state:} LE~ ev(e, unknown)~ LE' \rangle = 
\termC~~\termM~~ \langle E:\mbox{\it Events} \mid  \mbox{\it state:} LE~ NSt~ LE' \rangle
\end{array}
\end{equation}
\noindent
where $NSt$ is the expression
 
\noindent$\mbox{\bf if}~ B \wedge W >0 \wedge \bigwedge\limits_{(y:\in e_y) \in
  e_{any}} \mbox{\it not-empty}(\encexp{e_y}) \mbox{\bf ~ then }
ev(e,enable(W)) \mbox{\bf~ else } ev(e, blocked) \mbox{\bf~ fi } $

The constants of the model ($\termC$) and the current values of the variables
($\termM$) are used to evaluate the Boolean expression of the guard ($B$) and
the integer expression of the event's weight ($W$). If $B$ holds, $W>0$, and
the set of possible values for each parameters is not
empty, the event becomes enabled with weight $W$. Otherwise, it is blocked. The
variables $LE$ and $LE'$, of sort \code{LEvent}, allow to apply this equation
at any position of the list of events. 

\noindent
\textbf{Next event (ii).} 
The next step is to add to  \code{EBMACHINE}
the  probabilistic rule 
\begin{equation}\label{eq:choice}
    \pcrl{l(LE)}{r(LE,p)}{\phi(LE)}{p \ := \ \pi(LE)}
\end{equation}

where: $l(LE) = \langle E:\textit{Events}\mid state:~\textit{LE}\rangle $;
$\phi(LE)=true$ iff all event in $LE$ is either enabled or blocked and 
there is at least one enabled event;
$\pi(LE)$ is a uniform distribution on the interval $[0,W)$ where  $W =
\sum\{w_j\mid ev(e_j, enable(w_j) \in LE)\}$; and $r(LE,p) = \langle
E:\textit{Events}\mid state:~ ev(pick(acc(filter(LE)),p), \textit{execute})\rangle $.
Let $LE'=filter(LE)$ be the list of enabled events in $LE$
with weights $w_1,\cdots,w_n$. 
Moreover, let $LE_A=acc(LE')$ be as $LE'$ where the weight of the 
$ith$ event in $LE_A$ is $\sum\limits_{1\leq j\leq i }w_j$. 
Then, $pick(LE_A,p) = e_k$ iff $e_k$ is the first event
in $LE_A$ whose weight is strictly greater than $p$. 
Intuitively, the rule is enabled only when the state
of all the events is different from \emph{unknown} and at least one event is
\emph{enabled} (otherwise, the system is in a deadlock). 
The enabled events are filtered and their weights accumulated ($LE_A$). 
Hence, an enabled event $e$ with weight $w$ is chosen
for execution with probability 
$w/W$, where $W$ is the sum of the weights of the enabled events at the
current state.

Following \cite{agha-pmaude-2006}, the probabilistic rule 
\eqref{eq:choice} can be written in Maude by generating random numbers appropriately: 

\begin{maude}
crl [next-event] :  < events : Events | state: LE > => 
                    < events : Events | ev(pick(LEA, rand(max-value(LEA))), execute) >
if     all-ready(LE)              --- All e in LE is in state either blocked or enabled 
   /\  one-firable(LE)            --- At least one event is enabled 
   /\  LE' := filter(LE)          --- Extract the enabled events
   /\  LEA := accumulate(LE') .   --- Accumulate the weights 
\end{maude}
Here,  \code{max-value} returns the weight of the last element in \code{LEA} (i.e., 
the sum $W$ of the weights of the enabled events). A
random number in the interval $[0,W)$ is generated and the choice of the
next event follows the distribution $\pi(LE)$ as explained above.

\noindent\textbf{Actions in events (iii). }  Step (iii) consists in
defining a rule, for each event $e$, that updates the state of the
machine when such a rule is applied:
\begin{equation}\label{eq:change}
\begin{array}{lll}
\termC~\termM~\langle\mbox{\it events:Events} \mid \mbox{\it state: ev(e, execute)} \rangle & \Rightarrow &
\termC~ \termM'~\termE
\end{array}
\end{equation}
where
$
\termM' = \langle \mbox{\it M} : \mbox{\it Machine} \mid 
\mbox{\it variables:} \{ x \mapsto \$x
 \mid  {x \in \vec{x_u}} \}  \cup
 \{ x \mapsto \encexp{e_x}\mid  {x := e_x \in e_{act}}\}
\rangle
$
and $\vec{x_u}$ is the set of variables that do not appear in the
LHS of the set of actions/assignments $e_{act}$ of the event.
This rule can be fired only if the event has been chosen for
execution.  The expression $\encexp{e_x}$, updating the state of the
variable $x$, is built as before but new cases need to be considered:
\begin{itemize}
\item $\encexp{y}= \mbox{\it choice}(\mbox{\it makeList}(\encexp{e_y}))$ if $y:\in e_y$ is a parameter of the model; 
\item $\encexp{\{\{ E \}\}} = \mbox{\it choice}(\mbox{\it makeList}(\encexp{E}))$; and
\item $\encexp{\{\{ E_1@p_1,...,E_n@p_n\}\}} = \mbox{\it choice}(
  \mbox{\it accumulate}(\encexp{E_1} @ p_1,...,\encexp{E_n} @ p_n))$.
\end{itemize}

The set of values a parameter can take and the (set) expression $E$ in
the probabilistic assignment $x :=  \{ E \} $ are converted into
lists of \code{EBType}s. Since the elements of $\code{EBSet}$ and
\code{EBRel} are \emph{sortable}, these lists are always generated in
the same order. This guarantees that there is no un-quantified
non-determinism due to the way the elements of the set are arranged.
Similar to the definition of the rule \code{[next-event]}, 
random numbers are used to realize the needed probabilistic 
rewrite rule. More precisely,
the function $choice(L)$ generates a random number $0 \leq i <
size(L)$ and returns the $i$th element of the list $L$, thus following
a uniform probabilistic distribution. The $i$th element of
$L'=\mbox{\it accumulate}(\{E_1@p_1,...,E_n@p_n\})$) is $E_i @ \sum_{0\leq
  j \leq p_j }$.  In this case, the function \emph{choice} generates a
random number $0\leq R < 1$ and returns the first element $E_k @ p_k$
in $L'$ where $R < p_k$.
For illustration,  the tool generates the following rule for the event 
$pcmd$:
\begin{maude}
rl [pcmd] :  CC MM  < events  : Events  | state: ( ev('pcmd, execute) ) > =>
             CC < $MNAME : Machine | variables: 
                      ('handle |-> choice( makeList(val(elt("up")) , val(elt("down")))), 
                       'cmd |-> ($cmd) + (val(elt(1))), 'door |-> $door ...) >  EE .
\end{maude}

\noindent\textbf{Handling set comprehension.}  Sets in parameters and
RHS in assignments can be built using set comprehension.  Such
expressions must be evaluated in both, the equation determining the
state of the event (checking whether the set of values for a parameter
is empty or not) and the rule specifying the state transition.  The
expression $\{x. S \mid P(x)\}$ (resp., $\{x. S \mid F(x)\}$) can be
thought of as the  higher order function \emph{filter} (resp.,
\emph{map}) in functional languages.  Even though it is possible to
use the reflective capabilities of rewriting logic and meta-programming
in Maude to encode higher-order functions
\cite{DBLP:journals/entcs/ClavelDM00}, a
simpler path is followed here. For each expression of the form $\{x. S
\mid P(x)\}$, the tool generates a new operator and two equations of
the following form:
 
\begin{maude}
op $filter-id : EBType Configuration -> EBType .
eq $filter-id(val(empty), C) = val(empty) .   --- Base case
--- Recursive case
eq $filter-id(val((E,S)), (< $CNAME : Context | sets: ($sets),constants: ('c1 |-> $c1,...)>
                           < $MNAME : Machine | variables: ('v1 |-> $v1, ...) > )) =
   $filter-id(val( S ),   (< $CNAME : Context |  ... > < $MNAME : Machine | ... >)) union 
                            if ebtype2bool(exp-P(val(E))) then val(E) else val(empty) fi . 
\end{maude}

\noindent
The first parameter corresponds to the (encoding of the) set of values
$S$ and the second is the configuration giving meaning to the constants
and variables of the model. The term \emph{exp-P} is the \code{EBType}
expression encoding the predicate $P$ and the element $E$ is discarded
(\code{union val(empty)})
if such an expression evaluates to \code{val(elt(false))}.  Hence,
$\encexp{\{x.S \mid P(x)\}}=\mbox{\it filter-id}({\encexp{S}, \termC ~
  \termM})$. A similar strategy is used to encode the set $\{x.S \mid
F(x)\}$. 

Summing up, Event-B models are represented as a term of
sort \code{Configuration} with three objects: $ \langle
C:Context ~|~ CAts \rangle ~ \langle M:Machine ~|~ MAts
\rangle ~ \langle E : Events ~|~ EAts \rangle $ where
\emph{CAts}, \emph{MAts}, and \emph{EAts} are the attributes explained
above.  Moreover, Maude's variables appear in $MAts$
and \emph{CAts} in all the LHS of equations and rules, thus
making available the values of the model to encode
arbitrary Event-B expressions as \code{EBType}
expressions.

\begin{definition}[Encoding]\label{def-enc-machine}
Let $\mathcal{M}=  \langle\mathcal{C}, \vec{x}, I, \mathcal{E}, init \rangle$.
The rewrite theory 
$\mathcal{R}_{\mathcal{M}}$
specifying the behavior of 
$\mathcal{M}$ is
obtained by extending \code{EBCONTEXT}
 as in Definition \ref{def:enc-ctx} and extending the theory 
\code{EBMACHINE} with:  the 
 equations
 \eqref{eq-vars},
 \eqref{eq:events}, 
 \eqref{eq:act-evt}; the rule 
 \eqref{eq:change} (for each event);
 and the operators and equations
 defining filter- and map-like expressions. 
Given a state $\stateSym$
of the machine $\mathcal{M}$,
$\os\mathcal{M}\cs_\stateSym$ denotes the term  $\termC_\stateSym~~\termM_\stateSym~~\termE$
where 
$\termC_\stateSym$
is as $\termC$
 but  each variable $\$c_i$ is replaced with
$\encexp{\stateSym(c_i)}$ and 
$\termM_\stateSym$ replaces each variable  $\$x_i$ with
$\encexp{\stateSym(x_i)}$.
\end{definition}

Assuming that the specification of the arithmetic, Boolean, and
relational operators in $\mathcal{R}_{\mathcal{M}}$ is adequate with
respect to the corresponding semantics for the Event-B operators, it
is possible to show that the specification is correct in the following
sense.

\begin{theorem}[adequacy]
\label{th-adq}
Let $\mathcal{M}$ be an Event-B machine and
$\mathcal{R}_{\mathcal{M}}$ be as in Definition \ref{def-enc-machine}.
Hence, $\stateSym\rede{p,e} \stateSym'$ iff
$\os\mathcal{M}\cs_\stateSym \redi_p \os\mathcal{M}\cs_{\stateSym'}$,
where $\redi_p$ denotes one-step rewriting in
$\mathcal{R}_{\mathcal{M}}$ with probability $p$.
\end{theorem}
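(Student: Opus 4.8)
The plan is to prove both directions at once by a single probability computation, after pinning down what a one-step rewrite $\redi_p$ really is. In $\os\mathcal{M}\cs_\stateSym$ every event starts in state \code{unknown}, so before any rule can fire the equations~\eqref{eq:act-evt} normalize the events object. I would first prove, as a lemma resting on the assumed adequacy of $\encexp{\cdot}$ (namely that evaluating $\encexp{E}$ in $\termC_\stateSym\,\termM_\stateSym$ yields $\encexp{E[\stateSym]}$, so that the weight and guard of an event $e$ reduce to $w_e[\stateSym]$ and $G_e[\stateSym]$), that this normalization is confluent and terminating and labels each event $e_i$ as enabled with weight $w_{e_i}[\stateSym]$ exactly when $e_i\in\eevt(\stateSym)$, and as \code{blocked} otherwise (labeling one event never changes the weight/guard evaluation of another, since those depend only on $\termC$ and $\termM$). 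Consequently the canonical form of $\os\mathcal{M}\cs_\stateSym$ carries precisely the PLTS enabling information, and an observable step is the rule \code{[next-event]} followed by the action rule~\eqref{eq:change} for the selected event, with all equational simplification transparent in between.

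Next I would factor the composite probability so that it matches the three factors of $T(\stateSym,e,\stateSym')$. For the \emph{weight} factor, \code{[next-event]} is applicable iff no event is \code{unknown} and at least one is enabled (its \code{one-firable} premise), that is iff $\eevt(\stateSym)\neq\emptyset$; when it fires, \code{filter} and \code{accumulate} build the enabled events with prefix-summed weights, \code{max-value} returns $W=\sum_{e\in\eevt(\stateSym)}w_e[\stateSym]$, and the uniform draw of \code{rand} together with \code{pick} selects $e$ with probability $w_e[\stateSym]/W$. For the \emph{parameters} and \emph{assignments} factors I would inspect the action rule~\eqref{eq:change}: each parameter $y:\in S$ is sampled uniformly from (the encoding of) $S$ by \code{choice}, so a full valuation $\sigma$ has probability $1/|\mathcal{T}(\stateSym,e)|$; each assigned variable $x$ is updated by a \code{choice} over a uniform or accumulated list, so by the base lemma the probability that $x$ receives the value $\stateSym'(x)$ is $\sum_{E\in Val(x,\stateSym,\stateSym',\sigma,e)}P(E)$, and since the draws for distinct variables are independent their joint probability is the product over $x\in\evar(e)$. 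Variables outside $\evar(e)$ are copied verbatim, hence keep $\stateSym(x)$.

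Assembling the factors, the probability of $\os\mathcal{M}\cs_\stateSym\redi_p\os\mathcal{M}\cs_{\stateSym'}$ through $e$ is the weight factor times the sum, over all parameter valuations $\sigma$ that (for some choice of right-hand sides) produce $\stateSym'$, of $\frac{1}{|\mathcal{T}(\stateSym,e)|}\prod_{x\in\evar(e)}\sum_{E}P(E)$, which is exactly the defining expression for $T(\stateSym,e,\stateSym')$. The degenerate case $\stateSym\not\to\stateSym'$ splits cleanly: if $\eevt(\stateSym)=\emptyset$ then \code{[next-event]} cannot fire and no step exists; if $\stateSym$ and $\stateSym'$ disagree on some variable $x\notin\evar(e)$, then $\termM'$ still carries $\encexp{\stateSym(x)}\neq\encexp{\stateSym'(x)}$, so no $e$-step reaches $\os\mathcal{M}\cs_{\stateSym'}$; in both situations the probability is $0$ on each side. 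Since the two quantities coincide and are positive on exactly the same pairs, both directions of the equivalence follow simultaneously.

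The main obstacle is the fine-grained accounting in the action step. The PLTS probability is a nested sum of products --- an outer sum over guard-satisfying valuations, a product over the modified variables, and an inner sum over distinct right-hand-side expressions that evaluate to the same value --- and it must be matched exactly against the independent-sampling semantics of \code{choice} and \code{accumulate}. The most delicate point is the \emph{parameters} factor: one has to argue that the list produced by \code{makeList} enumerates precisely the valuations of $\mathcal{T}(\stateSym,e)$, so that a uniform \code{choice} realizes $1/|\mathcal{T}(\stateSym,e)|$; this hinges both on the set-comprehension domains capturing the guard restrictions on parameters and on the sortability of \code{EBSet} and \code{EBRel} fixing a single canonical enumeration order. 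That last property is precisely the absence of un-quantified non-determinism, without which $\redi_p$ would not even denote a well-defined probability, so I would isolate it as a first-class lemma rather than treat it as a side remark.
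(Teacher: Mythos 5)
Your proposal is correct and follows essentially the same route as the paper's (sketched) proof: equationally normalize the events object, use the assumed adequacy of $\encexp{\cdot}$ to show that the enabled events and their weights in the normal form coincide with $\eevt(\stateSym)$, and then match the \emph{weight} factor to the \code{[next-event]} rule and the \emph{parameters}/\emph{assignments} factors to the per-event action rule, with the degenerate case $\stateSym \not\to \stateSym'$ handled separately. You are somewhat more explicit than the paper about the delicate points --- that the uniform sampling of parameter domains must realize $1/|\mathcal{T}(\stateSym,e)|$ and that sortability is what makes $\redi_p$ a well-defined probability --- which the paper's sketch subsumes under the single assumption that the \code{EBType} operators agree with the Event-B semantics.
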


\begin{proof}(sketch).
  In what follows, $\rightsquigarrow$ denotes equational reduction in
  $\mathcal{R}_{\mathcal{M}}$.
Note that $p$ must be of the form $p_w\times q$ where $p_w$ is the
  first term (weight) in the transition probability function (\S
  \ref{sec:prob-sem}) and $q$ is the rest of the expression (resulting
  probability for parameters and assignments).
  $\os\mathcal{M}\cs_\stateSym$ necessarily exhibits the following
  reductions: $ \os\mathcal{M}\cs_\stateSym =
  \termC_\stateSym~\termM_\stateSym~\termE \rightsquigarrow^*
  \termC_\stateSym~\termM_\stateSym~\termE' \redi_{p'_w}
  \termC_\stateSym~\termM_\stateSym~\termE'' \redi_{q'}
  \termC_\stateSym~\termM'_\stateSym~\termE $ where: $\termE'$ is the
  (unique) normal form of $\termE$ where the state of the events are
  either $blocked$ or $enabled$; $\termE''$ results from an
  application of the rule {\it next-event}, choosing probabilistically
  one of the enabled events; and $\termM'_\stateSym$ is the new state
  after the application of the rule corresponding to the (unique)
  chosen event.  If the semantics of the operators in Event-B agrees
  with the one defined for terms of sort \code{EBType}, (i.e., the
  Event-B expression $e$ reduces to the value $v$ iff $\encexp{e}
  \rightsquigarrow^* \encexp{v}$) the set of enabled events in
  $\termE'$ must necessarily coincide with those enabled in state
  $\stateSym$ and necessarily $p_w = p'_w$. Under the same assumption,
  the set of possible values for the parameters are the same in
  $\stateSym$ and $\termM_\stateSym$. Therefore, $q = q'$ and the term
  $\termC_\stateSym~\termM'_\stateSym~\termE $ necessarily corresponds
  to $\os\mathcal{M}\cs_{\stateSym'}$.
\end{proof}

\section{Case Studies and Experiments}\label{sec:cases}
Given a model $\mathcal{M}$, the theory $\mathcal{R}_{\mathcal{M}}$
generated by the tool \cite{tool.website} can be used to perform
statistical analysis.  The resulting Maude's module includes a
template to ease the specification of properties as formulas in the
Quantitative Temporal Expressions language
(QuaTEx) \cite{agha-pmaude-2006}. 
QuaTEx expressions can query the
expected value of any expression, thus generalizing probabilistic computation
tree logic (PCTL) \cite{DBLP:journals/fac/HanssonJ94}.

QuaTEx formulas can be evaluated by
performing a Monte-Carlo simulation with the aid of the PVeStA
tool~\cite{alturki-pvesta-2011}. Hence, the expected value of a given
QuaTEx expression can be obtained within a confidence interval
according to a parameter $\alpha$.
PVeStA defines the operator \code{val: Nat Configuration -> Float}
that identifies the QuaTEx expression (first parameter) and, given a
configuration, returns the value of the expression as a float.  The
parser of the tool accepts a section \code{PROPERTIES}, at the end of the machine
definition, including Event-B expressions that are later translated
and used in equations giving meaning to \code{val}.  In the running
example, the expression \code{door=open} is translated to the
following equation that reduces to $1.0$ when the doors are open:

\begin{maude} 
eq val(1, Conf < $MNAME : Machine | variables: ... ) = toFloat((($door) =b (val(elt("open"))))) .  
\end{maude}
\vspace{-0.2cm}
The simulation shows that the expected value of such expression is 0.0
(the doors are always closed after finishing the maneuver).  Also, it
is possible to estimate that the probability of ending the sequence of
actions with the gear retracted (property \code{gear=retracted}) is
$0.49\pm 0.01$.  This is explained by the fact that the event $pcmd$
may change the value of the handle to $up$ and $down$ with equal
probability.

As a more compelling example, consider the model of the P2P protocol
in~\cite{DBLP:journals/sosym/AouadhiDL19}.  A file partitioned into
$K$ blocks is to be downloaded by $N$ clients; a client can only
download a block at a time.  As a model of failure, some blocks may be
lost and then retransmitted.  The context of the model includes two
constants $N$ and $K$ of type $Nat$.  A deferred set $STATE = \{ emp,
ok, downloading \}$ defines the state of the blocks. The file is
modeled as a variable of type \code{POW(Nat * State)} and initialized
with the expression \code{(0 .. (N * K - 1)) * \{emp\}}.
The pair $i \mapsto emp$ means that the block $i / N$ has not been
downloaded by the client $i \mbox{ mod } N$. Below the events for
receiving and sending blocks:
 
\begin{EventB}
EVENT sent 
  WEIGHT N * K  - card(file  $\ranres$ {downloading})
  ANY block $:\in${x . dom(file $\ranres$ {emp}) | ((x mod N) $\notin$ {y . dom(file $\ranres$ {downloading}) | y mod N})}
  WHERE True 
  THEN file := file $\ovl$ { block |-> downloading }      
       n    := n + 1 
END
EVENT receive 
  WEIGHT 1 + card( file $\ranres$ {ok})
  ANY block $:\in$ dom(file $\ranres$ { downloading }) 
  WHERE True 
  THEN file := file $\ovl$ { block |-> ok } 
END
\end{EventB}

The event $sent$ selects a block $x$ s.t. $file(x)=emp$ (range restriction $\ranres$)
and whose
client ($x \mbox{ mod } N$) is not in the set of clients
that are currently downloading blocks.
Note the nested set comprehension expression.  In that case, the
variable $file$ is updated ($\ovl$) by changing the state of the
block to $downloading$ and incrementing the counter $n$. The event
$receive$ selects one of the blocks in state $downloading$ 
and updates it to the state $ok$, signaling that it was successfully
downloaded.  The probability of sending blocks decrements according to
the number (\code{card}) of blocks being downloaded.

\noindent\begin{wrapfigure}{l}{0.47\textwidth}
\begin{minipage}{.47\textwidth}
\vspace{-0.2cm}
\begin{EventB}
EVENT fail 
 WEIGHT N * K - card( file $\ranres$ {ok})
 ANY block $:\in$ dom(file $\ranres$ { downloading }) 
 WHERE True
 THEN file := {file @ 0.6 ,
              (file$\ovl${ block |-> emp })@ 0.4} 
END
\end{EventB}
\end{minipage}
\vspace{-0.8cm}
\end{wrapfigure}
The event \emph{fail} 
leaves the file unchanged with probability 0.6 and,
with probability 0.4, it changes the state of one
$downloading$ block to $emp$. In the second case, the
block needs to be retransmitted. 
For instance, if $N=16$ and $K=30$, 
PVeStA reports that the expected value for $n$  is $1554.56$ (i.e., each block
is transmitted, in average, 3.24 times). Other experiments
for different values of $N$ and $K$ as well as more details 
about the simulations can be found at  \cite{tool.website}.
The site of the tool contains also  other case  studies including  the probabilistic model for the emergency
brake system described in \cite{aouadhi:hal-01316599} and 
the bounded re-transmission protocol modeled in Event-B in
 \cite[Chapter 6]{DBLP:books/daglib/0024570}.


 \vspace{-0.1cm}
\section{Concluding Remarks}\label{sec:conc}
Combining  formalisms in the context
of the B-method has been explored in different directions, thus leading to 
more robust tools for system modeling and verification. 
The authors of \cite{afendi-eventbhybrid-2020} propose a
correct-by-construction approach for hybrid systems in Event-B.  The
main idea is to move from an event-triggered to a time-triggered
approach because of real-life scenarios. However, models become more
difficult to verify. For this latter purpose, the authors use a
dynamic logic for refinement relations on hybrid systems to prove that
time-triggered models are refinements of event-triggered models. There
is a wide-range effort by the ANR agency and its partners in the EBRP
project~\cite{ebrp} to enhance Event-B and the corresponding Rodin \cite{DBLP:books/daglib/0024570} 
toolset by defining extension mechanisms. Their main goal is to allow
Event-B models to import and use externally defined domain theories
via theory constructs already available in Event-B and implemented in
Rodin as a plug-in. The work in 
\cite{tarasyuk-eventbhybrid-2015} proposes an extension of Event-B
to enable stochastic reasoning about dependability-related
non-functional properties of cyclic systems.  Such an extension
integrates reasoning about functional correctness and stochastic
modeling of non-functional characteristics. Recently, the authors in 
\cite{DBLP:journals/corr/abs-2108-07878}
 have proposed B Maude, a prototype executable
environment for the Abstract Machine Notation (AMN) implemented in the
Maude language. It endows the B method with execution by rewriting,
symbolic search with narrowing, and Linear Temporal Logic model
checking of AMN descriptions.

Following the lines of the aforementioned works, this paper couple Event-B
with further tools and reasoning techniques by proposing  a rewriting logic
semantics for a probabilistic extension of it. The translation from an Event-B
model to a probabilistic rewrite theory has been fully automated, and it was
shown to be sound and complete w.r.t. the semantics of the model. The
translation supports a wide spectrum of operators and constructs usually
present in Event-B specifications and all sources of probabilistic behavior
 as proposed in \cite{DBLP:journals/sosym/AouadhiDL19}. The
resulting probabilistic rewrite theory can be executed in
Maude~\cite{agha-pmaude-2006} and statistically model checked with the PVeStA
tool~\cite{alturki-pvesta-2011}. A case study has been presented to illustrate
the encoding and how the statistical analysis enabled by the translation can
complement the inference-based approach in Event-B.

Future work stems from different needs. It is worth investigating how to
incorporate other distribution functions to govern probabilistic choices
(including concurrency/interleaving in events) in Event-B models. As shown in \S
\ref{sec:spec-machine}, probabilistic rewrite theories can incorporate
arbitrary probabilistic distribution functions. Hence, the semantics proposed
here can be used to experiment with different alternatives and propose suitable
extensions for Event-B. Support for new operators and constructs (e.g.,
arbitrary types in Cartesian products) could be included in a new version of
the translation. Furthermore, a plug-in to integrate the tool proposed here to
Rodin is currently under development. 
The framework presented here  supports also non-probabilistic Event-B models
(see the bounded re-transmission protocol \cite{DBLP:books/daglib/0024570} in the tool's site
whose only source of probabilistic behavior is the weight on events).
It will be interesting to investigate the use of  symbolic techniques, such 
as rewriting modulo SMT,  for synthesis of parameters
in Event-B specifications (e.g., finding the range of values for constants
that makes the invariants true). 


\end{document}